\newtheorem{assumption}{Assumption}
\newcommand{\name}{$\text{TEE}^{\text{BFT}}$}
\title{\texorpdfstring{\name{}: Pricing the Security of Proof of Cloud}{TEE-BFT: Pricing the Security of Proof of Cloud}}
\author{Alex Shamis\inst{1}\thanks{Authors listed in alphabetical order.} \and Matt Stephenson\inst{1} \and Linfeng Zhou\inst{1}}
\authorrunning{A. Shamis \and M. Stephenson \and L. Zhou} 
\institute{Subzero Labs}
\date{} 
\begin{document}
\maketitle

\begin{abstract}
  Blockchains face inherent limitations when communicating outside their own ecosystem. Trusted Execution Environments (TEEs) are a promising mitigation because they allow trusted brokers to interface with external systems. In this work, we develop a cost-of-collusion principal–agent model for compromising a TEE in a Data Center Execution Assurance design \cite{rezabek2025proofclouddatacenter}. Our model focuses on four core determinants of attack profitability: (i) a $K$‑of‑$n$ threshold for successful collusion, (ii) detection risk, (iii) per‑member sanctions $F_i$ conditional on being caught, and (iv) an extractable \emph{flow} reward $\omega$, which is distinguished from the total value (``stock'') of the system. 

\bigskip

  With these primitives we derive (a) a condition for rational collusion, (b) closed‑form deterrence thresholds showing how modifying parameters affects rational collusive, and (c) a design bound that guarantees non‑profitability of profitable collusion. Calibrations informed by time‑advantaged arbitrage \cite{fritsch2024timeadv} and estimates of security breach fallout demonstrate that protocols can plausibly secure on the order of \$1T in value against TEE compromise. We allude to this secure design as \name{}. 
\end{abstract}

\section{A Cost of Collusion Model of TEE Security}
\label{sec:principal-agent-model}
TEE security can be strengthened through the use of cloud providers \cite{rezabek2025proofclouddatacenter}, as well as multi-party-computation key splitting approaches under BFT assumptions e.g. \cite{gao2021teekap}. In this paper we model and attempt to estimate the plausible security available for using these methods.

Our model attends to TEE breaches with a principal–agent framework that captures the
incentives of trusted-execution-environment (TEE) providers in this setting. The goal is to quantify when collusion is not profitable for rational agents akin to e.g. \cite{becker1968crime}, given detection probabilities, penalties, and practical flow constraints on what attackers can seize during a feasible window.

Other related work on flow extraction and timing frictions include Flash Boys~2.0 which documented MEV and priority‑gas auctions, linking ordering rents to consensus‑layer risk~\cite{daian2019flashboys}. Fritsch–Mamageishvili–Silva–Livshits–Felten formalize \emph{time‑advantaged arbitrage}, showing under martingale‑like prices that optimal arbitrage waits until the end of the advantage window, with policy variants (e.g., auctions) reallocating a share of that flow~\cite{fritsch2024timeadv}. Our “stock vs.\ flow” calibration uses exchange turnover as a conservative proxy to bound short‑window extractable value; recent WFE and SIFMA statistics give global market‑cap and value‑traded orders of magnitude, and Budish’s economic‑limits argument is a conceptual antecedent to our “cheapest attack remains uneconomical” design condition~\cite{wfe-2024,sifma-2025,budish2018limits}.

Censorship dynamics in fraud‑proof protocols have been cast as explicit budgeted games. Berger–Felten–Mamageishvili–Sudakov derive challenge‑period lengths ensuring defender success as a function of move counts and attacker\slash defender budgets~\cite{berger2025econcensor}. Coalition‑proofness (CPNE) provides a canonical equilibrium notion for multi‑party deviation, justifying checks at threshold coalitions under monotone detection~\cite{bernheim1987cpne}.

Proof of Cloud~\cite{rezabek2025proof} have explored the ability to detect where a TEE is being hosted and examined the security of TEEs in cloud environments, but does not consider economic incentives directly. Systems like Enigma~\cite{zyskind2018enigma} and CCF (Confidential Consortium Framework)~\cite{howard2023confidential,russinovich2019ccf}, have proposed to leverage TEEs for privacy-preserving computations on blockchain networks with similar requirements for TEE security properties. However, to date there has been little attention to identifying thresholds at which it would be economically rational for agents to collude, and recent work on ``proof of cloud'' makes plausible our economic estimation of the security that may be available under.

Our paper's primary contribution is in modeling the multiplicative security available under such designs. Notably, our model does not assume that TEEs are secure against attacks--it is conservatively assumed that physical access is sufficient to extract the secrets of a TEE in the model. Instead we focus on the incentives of the service providers under breach and collusion detection risk.

\section{Model}\label{sec:model}
\subsection{Environment and threshold}\label{subsec:env}
There are \(n\ge2\) providers \(i\in\{1,\dots,n\}\).
A system event (e.g., a threshold decryption) occurs if at least \(K\) providers act in concert. We attend to the simple majority rule often preferred to ensure liveness
\begin{equation}
K(n)\;=\;\left\lfloor\frac{n}{2}\right\rfloor+1.
\label{eq:majority}
\end{equation}
\paragraph{Core Primitives: Rewards and Sanctions}
Then consider that provider \(i\) faces a per-member sanction scale \(F_i>0\) (legal, reputational, and/or balance-sheet exposure) if detected. For heterogeneity in $F$, we let \(F_{(1)}\ge\cdots\ge F_{(n)}\) denote the order statistics. 

The bounty that can be captured by successful colluders is $\omega$, which we treat as having equivalent units to $F$. We treat this as a fraction not greater than the total value secured by the system, \(V\ge0\). Intuitively, if a group could collude to discover the secret upcoming trades on the stock market, they would have to maintain that access for some length of time before they could capture full value of the stock market itself. This is affected by the observed flow rate (which is about 5\% per month for the US Stock market), the flow rate conditional on suspicious activity (e.g. the colluders have not been discovered, but traders have observed suspicious activity and are being more cautious and trading less), as well as time-governing aspects of the security breach itself (e.g. if cloud providers use different makes of TEE, they require any discovered vulnerabilities to overlap and not be patched),  That is, $V$ is the ``stock'' and $\omega \le V$ defines the ``capturable flow'' for the colluding group, with the proportional relationship governed by \(\beta\in(0,1]\).  Thus we characterize an extractable flow ``prize'' of
\begin{equation}
\omega\;=\;\beta\,V.
\label{eq:flow}
\end{equation}

\paragraph{Roadmap.} The primitives are a flow prize $\omega=\beta V$ and per-member sanction scales $\{F_i\}$. Two frictions govern feasibility: (i) a threshold $K$ of providers must act for the event to occur; and (ii) independent detection at rate $q$ per member induces both pre-coordination and execution risk. We first analyze the complete-information game and show that only the two corner symmetric profiles can be equilibria, with existence of the collusive corner governed by a simple odds-ratio condition. We then introduce dispersed information and obtain a unique symmetric cutoff equilibrium à la global games, which selects the collusive boundary as noise vanishes. Finally, we study heterogeneity in $F$ under a Zipf law to characterize how majority size $K$ shifts deterrence.

\subsection{Detection and success}\label{subsec:detection}
Let \(q\in(0,1)\) be the per-member independent detection probability. Following \cite{polinsky1979optimal,polinsky2000deterrence}, observe that for a coalition of size \(m\ge1\),
\begin{equation}
p(m)\;=\;1-(1-q)^m,\qquad
p_K:=p(K),\quad \tilde p:=p(K-1).
\label{eq:pofm}
\end{equation}
Note that we allow for $0<\tilde p \le p_K$ to account for the fact that detection probability may be lower (but not strictly zero) before a collusive group has formed. Given a belief \(\alpha\in[0,1]\) that each other provider joins, the probability that at least \(K-1\) of the other \(n{-}1\) providers join is
\begin{equation}
\pi_{n,K}(\alpha)\;=\;\Pr\!\big[\mathrm{Bin}(n{-}1,\alpha)\ge K{-}1\big].
\label{eq:success}
\end{equation}

\paragraph{Detection layers.} We model detection in two layers. First, pre-coordination exposes at least $K-1$ members (e.g., outreach, key-share solicitation), so an attempt incurs detection probability $\tilde p:=p(K-1)$. Second, if a size-$K$ coalition forms and executes, there is incremental exposure $p_K-\tilde p$. Hence the joiner's unconditional detection probability at belief $\alpha$ is $\bar p(\alpha)=\tilde p+\pi_{n,K}(\alpha)\big(p_K-\tilde p\big)$.

\subsection{Coalition composition and the binding type}\label{subsec:binding}
We assume the deviating coalition can choose its members. It is then optimal to select the \(K\) providers with the smallest sanction scales.

\begin{lemma}[Binding type]\label{lem:binding}
Fix \((n,K)\) and a profile \((F_i)_{i=1}^n\). The coalition’s participation is pinned down by the binding (highest-$F$) member among the chosen $K$, and $F_{\mathrm{eff}}$ enters individual incentives. If there exists a profitable coalition of size \(K\), there exists one consisting of the \(K\) lowest-\(F\) providers, with the marginal (binding) member having type
\begin{equation}
F_{\mathrm{eff}}(n,K)\;=\;F_{(n-K+1)}.
\label{eq:Feff-def}
\end{equation}
\end{lemma}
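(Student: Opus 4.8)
The plan is to prove the two claims of the lemma — that a profitable coalition of size $K$, if one exists, can be taken to consist of the $K$ lowest-$F$ providers, and that within that coalition the marginal (binding) member is the one with type $F_{(n-K+1)}$ — by a straightforward exchange/monotonicity argument on the per-member incentive condition.

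First I would make explicit the per-member participation condition that governs whether a given provider $i$ is willing to join a size-$K$ coalition. From the detection model in \S\ref{subsec:detection}, a joining member in a coalition that will form and execute faces some detection probability (determined by $q$, $K$, $n$, and beliefs, via $p_K$, $\tilde p$, and $\pi_{n,K}$) and, conditional on detection, pays the sanction scale $F_i$; against this it weighs its share of the flow prize $\omega$. The key structural observation is that every object in this comparison — the detection probabilities, the threshold $K$, the prize $\omega=\beta V$, and the per-capita split — is \emph{identical} across members of a size-$K$ coalition; the \emph{only} term that depends on the member's identity is $F_i$. Hence the individual incentive to participate is monotone decreasing in $F_i$: if a provider with sanction scale $F$ finds it profitable to join, so does every provider with sanction scale $F' \le F$.

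Given that monotonicity, the first claim follows by an exchange argument. Suppose $S$ is a profitable coalition of size $K$ that is \emph{not} the set of $K$ lowest-$F$ providers. Then there is some $j \in S$ and some $\ell \notin S$ with $F_\ell < F_j$. Replace $j$ by $\ell$ to form $S' = (S \setminus \{j\}) \cup \{\ell\}$; this is still a size-$K$ coalition, and since $F_\ell < F_j$ and participation is monotone in $F_i$, every member of $S'$ — the unchanged members and the new member $\ell$ — still satisfies its participation condition, so $S'$ is profitable. Iterating the swap a finite number of times reaches the coalition of the $K$ lowest-$F$ providers, which is therefore profitable; this proves existence. For the second claim, observe that whether \emph{all} members of a coalition $S$ participate is governed by the hardest-to-satisfy member, i.e.\ the one with the largest $F_i$ in $S$ (again by monotonicity). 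For $S$ equal to the $K$ lowest-$F$ providers, the largest sanction scale among them is the $K$-th smallest overall, which in order-statistic notation (with $F_{(1)}\ge\cdots\ge F_{(n)}$, so $F_{(n)}$ is the smallest) is $F_{(n-K+1)}$; this is $F_{\mathrm{eff}}(n,K)$, and the coalition's participation is pinned down by this binding member, as claimed.

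I do not expect a genuine obstacle here — the lemma is essentially a bookkeeping/monotonicity statement — but the one point requiring care is being precise that the \emph{only} identity-dependent term in the per-member payoff is $F_i$, so that the comparison really is monotone. In particular one should confirm that the detection probability a joiner faces does not depend on \emph{which} other providers are in the coalition (only on how many), which is exactly the i.i.d.\ structure of $q$ built into \eqref{eq:pofm}–\eqref{eq:success}; and that the prize $\omega$ and its split do not depend on composition. A secondary subtlety is the indexing convention: because the paper orders the order statistics in \emph{decreasing} order, one must double-check that "the $K$ lowest-$F$ providers" corresponds to indices $n-K+1,\dots,n$ and hence that the binding (highest within the group) type is $F_{(n-K+1)}$ rather than, say, $F_{(K)}$. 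Both of these are quick sanity checks rather than substantive difficulties.
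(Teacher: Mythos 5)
Your proposal is correct and follows essentially the same route as the paper's proof: an exchange argument swapping any member $j$ for a lower-$F$ outsider $\ell$, justified by the observation that only the sanction term depends on identity, iterated to reach the $K$ lowest-$F$ providers, whose largest sanction scale is $F_{(n-K+1)}$ under the decreasing order-statistic convention. Your added care about the indexing convention and the composition-independence of detection is a reasonable elaboration of the same argument, not a different approach.
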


\begin{proof}[]
Replacing any coalition member \(j\) with a provider \(\ell\) such that \(F_\ell<F_j\) weakly raises each member's payoff: the prize share is unchanged while the expected sanction falls. Iterating this replacement yields a coalition of the \(K\) smallest \(F_i\)'s. The largest \(F\) within that set is the binding type.
\end{proof}

\subsection{Payoff: expected prize minus expected sanction}\label{subsec:payoff}
If a provider joins, the expected per-member payoff given belief \(\alpha\) is
\begin{align}
U_J(\alpha)
&=\ \underbrace{\pi_{n,K}(\alpha)\,\frac{1-p_K}{K}\,\omega}_{\text{expected prize}}
\;-\;
\underbrace{\bar p(\alpha)\,F_{\mathrm{eff}}(n,K)}_{\text{expected sanction}},
\label{eq:U_join}\\[3pt]
\bar p(\alpha)
&:=\ \tilde p\;+\;\pi_{n,K}(\alpha)\,\big(p_K-\tilde p\big).
\label{eq:barp}
\end{align}
Equivalently,
\begin{equation}
U_J(\alpha)\;=\;\underbrace{-\tilde p\,F_{\mathrm{eff}}(n,K)}_{\text{attempt cost}}
\;+\;
\pi_{n,K}(\alpha)\,\underbrace{\Big(\frac{1-p_K}{K}\,\omega-[p_K-\tilde p]\,F_{\mathrm{eff}}(n,K)\Big)}_{\text{success bonus}}.
\label{eq:attempt-bonus}
\end{equation}
Not joining yields \(0\).

\begin{remark}[Group Rationality]
The relevant slope in beliefs is positive whenever the expected marginal benefit 
from additional coordination outweighs the marginal increase in expected detection. 
Under independent detection, this reduces to an odds-ratio condition:
\begin{equation}
\frac{\omega}{K\,F_{\mathrm{eff}}}\;>\;\frac{q}{1-q}.
\label{ass:GC}
\end{equation}
Intuitively, the group’s per-capita prize–to–sanction ratio must exceed the 
individual detection odds.
\end{remark}

\begin{lemma}[Monotonicity in beliefs]\label{lem:monotone-alpha}
From equation ~\ref{ass:GC}, \(U_J(\alpha)\) is strictly increasing in \(\alpha\).
\end{lemma}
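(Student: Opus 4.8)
The plan is to exploit the affine structure of $U_J$ in the single belief-dependent factor $\pi_{n,K}(\alpha)$. Reading off the decomposition in~\eqref{eq:attempt-bonus}, I would write
$U_J(\alpha) = -\tilde p\,F_{\mathrm{eff}}(n,K) + B\cdot \pi_{n,K}(\alpha)$,
where the attempt-cost term $-\tilde p\,F_{\mathrm{eff}}(n,K)$ is constant in $\alpha$ and $B := \tfrac{1-p_K}{K}\,\omega - (p_K-\tilde p)\,F_{\mathrm{eff}}(n,K)$ is the ``success bonus'' coefficient, likewise independent of $\alpha$. Hence it suffices to show two facts: (i) $\alpha\mapsto\pi_{n,K}(\alpha)$ is strictly increasing on $[0,1]$; and (ii) $B>0$. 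Granting both, $U_J$ is a strictly increasing affine image of $\pi_{n,K}$, which is the claim; equivalently, for $\alpha<\alpha'$ one has $U_J(\alpha')-U_J(\alpha)=B\,(\pi_{n,K}(\alpha')-\pi_{n,K}(\alpha))>0$.

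For step (i) I would differentiate the binomial upper tail, using the standard identity $\frac{d}{d\alpha}\Pr[\mathrm{Bin}(m,\alpha)\ge k] = k\binom{m}{k}\,\alpha^{k-1}(1-\alpha)^{m-k}$, which is strictly positive for $\alpha\in(0,1)$ whenever $1\le k\le m$. Here $m=n-1$ and $k=K-1$, and the regime $2\le K\le n$ (which holds under the majority rule~\eqref{eq:majority}) gives $1\le K-1\le n-1$, so the derivative is genuinely positive rather than vacuously zero. Since $\pi_{n,K}(0)=\Pr[0\ge K-1]=0$ and $\pi_{n,K}(1)=\Pr[n-1\ge K-1]=1$, strict monotonicity extends to the closed interval. (A coupling argument — dominantly coupling $\mathrm{Bin}(n-1,\alpha)$ below $\mathrm{Bin}(n-1,\alpha')$ for $\alpha<\alpha'$ and observing that the threshold $K-1$ is crossed with positive probability — gives the same conclusion without calculus.)

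For step (ii) I would convert $B>0$ into the odds-ratio of~\eqref{ass:GC}. Rearranging, $B>0$ is equivalent to $\frac{\omega}{K\,F_{\mathrm{eff}}(n,K)} > \frac{p_K-\tilde p}{1-p_K}$. Substituting the independent-detection forms from~\eqref{eq:pofm}, $p_K = 1-(1-q)^K$ and $\tilde p = p(K-1) = 1-(1-q)^{K-1}$, yields $p_K-\tilde p = q(1-q)^{K-1}$ and $1-p_K=(1-q)^K$, so $\frac{p_K-\tilde p}{1-p_K}=\frac{q}{1-q}$. Thus $B>0$ is precisely condition~\eqref{ass:GC}, and combining with step (i) finishes the proof.

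The computations are routine; the only points needing care are (a) checking the index constraints $1\le K-1\le n-1$ so that the strict inequality in step (i) is not vacuous, and (b) being explicit that the reduction $\frac{p_K-\tilde p}{1-p_K}=\frac{q}{1-q}$ uses the independent-detection specification $\tilde p=p(K-1)$ rather than the weaker standing assumption $0<\tilde p\le p_K$; under the latter, strict monotonicity of $U_J$ still follows whenever $B>0$, but the sharp threshold is the (weakly larger) ratio $\frac{p_K-\tilde p}{1-p_K}$ in place of $\frac{q}{1-q}$.
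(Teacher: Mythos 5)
Your proof is correct and follows essentially the same route as the paper: write $U_J(\alpha)$ as an affine function of $\pi_{n,K}(\alpha)$, note that the binomial upper tail is strictly increasing in $\alpha$, and show the success-bonus coefficient is positive under the odds-ratio condition~\eqref{ass:GC}. You usefully fill in two steps the paper leaves implicit --- the explicit derivative identity for $\pi_{n,K}$ and the algebra $p_K-\tilde p=q(1-q)^{K-1}$, $1-p_K=(1-q)^K$ showing that positivity of the bracket is \emph{exactly} condition~\eqref{ass:GC} --- but the underlying argument is identical.
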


\begin{proof}
Since \(U_J(\alpha)= -\tilde p\,F_{\mathrm{eff}}
+\pi_{n,K}(\alpha)\Big(\tfrac{1-p_K}{K}\,\omega-(p_K-\tilde p)\,F_{\mathrm{eff}}\Big)\),
we have
\[
\frac{dU_J}{d\alpha}=\pi'_{n,K}(\alpha)\,
\Big(\tfrac{1-p_K}{K}\,\omega-(p_K-\tilde p)\,F_{\mathrm{eff}}\Big).
\]
Because \(\pi'_{n,K}(\alpha)>0\) for \(\alpha\in(0,1)\) and the bracket is positive via \ref{ass:GC}, \(dU_J/d\alpha>0\).
\end{proof}

\begin{proposition}[Multiple equilibria]\label{prop:corners}
For \(n\ge2\) and majority \(K\), the no-join profile is always a symmetric equilibrium. 
In addition, the all-join profile is a symmetric equilibrium if and only if
\begin{equation}
U_J(1)\;=\;\frac{1-p_K}{K}\,\omega\;-\;p_K\,F_{\mathrm{eff}}(n,K)\;\ge\;0.
\label{eq:corner-test}
\end{equation}
Under Assumption~\ref{ass:GC}, these are the only symmetric pure equilibria.
\end{proposition}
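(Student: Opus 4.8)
The plan is to parametrize a symmetric strategy profile by the common join probability $\alpha\in[0,1]$; then the only symmetric \emph{pure} profiles are the two corners $\alpha=0$ (no-join) and $\alpha=1$ (all-join), and the whole statement reduces to evaluating the joiner's payoff $U_J$ from \eqref{eq:U_join} at these two points, using the trivial best-response rule that, facing common belief $\alpha$, joining is a best response iff $U_J(\alpha)\ge 0$ and abstaining is the strict best response iff $U_J(\alpha)<0$. Throughout, $F_{\mathrm{eff}}(n,K)$ is the binding member's type supplied by Lemma~\ref{lem:binding}.

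First I would dispatch the no-join corner. Since the majority rule \eqref{eq:majority} gives $K(n)=\lfloor n/2\rfloor+1\ge 2$ for every $n\ge 2$, we have $K-1\ge 1$, so $\mathrm{Bin}(n-1,0)\equiv 0<K-1$ and hence $\pi_{n,K}(0)=0$ in \eqref{eq:success}; substituting into \eqref{eq:attempt-bonus} collapses the joiner's payoff to the attempt cost $U_J(0)=-\tilde p\,F_{\mathrm{eff}}(n,K)<0$, using $\tilde p>0$ (from the standing assumption $0<\tilde p\le p_K$) and $F_{\mathrm{eff}}>0$. As abstaining yields $0$, abstaining is the strict best response against $\alpha=0$, so no-join is always a (strict) symmetric equilibrium, with no extra hypothesis needed.

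Next, the all-join corner. Since $K(n)\le n$, we have $\mathrm{Bin}(n-1,1)\equiv n-1\ge K-1$, so $\pi_{n,K}(1)=1$; then \eqref{eq:barp} gives $\bar p(1)=\tilde p+(p_K-\tilde p)=p_K$, and \eqref{eq:U_join} yields $U_J(1)=\tfrac{1-p_K}{K}\,\omega-p_K\,F_{\mathrm{eff}}(n,K)$, which is precisely the left-hand side of \eqref{eq:corner-test}. By the best-response rule at $\alpha=1$, the all-join profile is a symmetric equilibrium iff $U_J(1)\ge 0$, the stated condition.

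For the last clause I would argue that, because each provider's action set is binary, the two corners are literally the only symmetric pure profiles, so the two cases above already exhaust them; Assumption~\ref{ass:GC} enters only to control what happens \emph{between} the corners. By Lemma~\ref{lem:monotone-alpha} it renders $U_J$ continuous and strictly increasing on $[0,1]$, so $U_J$ has at most one zero in $(0,1)$; together with $U_J(0)<0$, the only candidate symmetric equilibrium away from the corners is the single interior cutoff (present exactly when $U_J(1)>0$), and this is a \emph{mixed} profile, hence not a pure equilibrium — so the corners do exhaust the symmetric pure equilibria, and moreover there is no continuum of interior symmetric equilibria, which is what keeps the global-games selection of the next subsection well-posed. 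I expect the only real subtlety to be the two boundary evaluations $\pi_{n,K}(0)=0$ and $\pi_{n,K}(1)=1$, which hinge on the double inequality $2\le K(n)\le n$, and the appeal to $\tilde p>0$ to upgrade weak to strict optimality at the no-join corner; the rest is substitution.
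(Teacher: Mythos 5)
Your proposal is correct and follows essentially the same route as the paper: evaluate $\pi_{n,K}$ at the corners, read off $U_J(0)=-\tilde p\,F_{\mathrm{eff}}<0$ and $U_J(1)$, and apply the best-response rule. Your observation that the final clause is actually vacuous for \emph{pure} symmetric profiles (binary actions leave only the two corners), with Assumption~\eqref{ass:GC} needed only to control interior mixed profiles, is a sharper reading than the paper's appeal to monotonicity, but it is the same argument in substance.
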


\begin{proof}[Proof sketch]
With \(\alpha=0\), \(\pi_{n,K}(0)=0\) and \(\bar p(0)=\tilde p\), so \(U_J(0)=-\tilde p\,F_{\mathrm{eff}}<0\); hence no-join is an equilibrium. 
With \(\alpha=1\), \(\pi_{n,K}(1)=1\) and \(\bar p(1)=p_K\), so all-join is an equilibrium iff \eqref{eq:corner-test} holds. 
Under Assumption~\ref{ass:GC} and Lemma~\ref{lem:monotone-alpha}, best responses are strictly increasing in \(\alpha\), which rules out additional symmetric pure equilibria between the corners.
\end{proof}

\subsection{Private information and selection}\label{subsec:gg}
\paragraph{Information structure.}
Let the fundamental \(\theta\in\mathbb{R}\) have continuous prior \(H\) with density \(h\) and full support. Each provider \(i\) observes
\[
s_i=\theta+\varepsilon_i,\qquad \varepsilon_i\ \text{i.i.d. with CDF }G_\sigma\ \text{and strictly log-concave density }g_\sigma,
\]
independent of \(\theta\). The parameter \(\sigma>0\) indexes information precision.

\begin{assumption}[Monotonicity in the fundamental]\label{ass:theta}
For each \(\alpha\), \(U_J(\alpha;\theta)\) is strictly increasing in \(\theta\).
\end{assumption}

\paragraph{Equilibrium.}
Let \(\alpha(\theta;\tau):=1-G_\sigma(\tau-\theta)\) denote the probability another provider joins when everyone uses cutoff \(\tau\).

\begin{proposition}[Unique symmetric cutoff equilibrium for each \(\sigma>0\)]
\label{prop:unique}
Under Assumption~\ref{ass:GC}, the information structure in \S\ref{subsec:gg}, 
and assuming \(U_J(\alpha;\theta)\) is strictly increasing in \(\theta\), 
there exists a unique symmetric Bayesian Nash equilibrium in monotone strategies: 
for each \(\sigma>0\) there is a unique threshold \(\tau_\sigma\) such that 
agent \(i\) joins iff \(s_i\ge\tau_\sigma\). The cutoff satisfies
\begin{equation}
\mathbb{E}\!\left[\,U_J\!\big(\pi_{n,K}(\alpha(\theta;\tau_\sigma));\theta\big)\ \big|\ s_i=\tau_\sigma\,\right]=0.
\label{eq:cutoff-characterization}
\end{equation}
\end{proposition}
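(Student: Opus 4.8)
The plan is to follow the standard global-games argument (Morris–Shin / Frankel–Morris–Pauzner style), adapted to the binomial threshold structure already set up in the excerpt. First I would define the payoff-to-joining of a provider who observes signal $s$ and believes all others use cutoff $\tau$: writing $\alpha(\theta;\tau)=1-G_\sigma(\tau-\theta)$ for the probability a given other provider joins conditional on $\theta$, the provider's expected payoff is
\begin{equation}
\Phi_\sigma(s,\tau):=\mathbb{E}\!\left[\,U_J\!\big(\pi_{n,K}(\alpha(\theta;\tau));\theta\big)\ \big|\ s_i=s\,\right],
\label{eq:Phi-def}
\end{equation}
where the conditional law of $\theta$ given $s_i=s$ has a density proportional to $h(\theta)\,g_\sigma(s-\theta)$. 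The equilibrium cutoff is characterized by the indifference condition $\Phi_\sigma(\tau_\sigma,\tau_\sigma)=0$, which is exactly \eqref{eq:cutoff-characterization}; so the task reduces to showing this equation has a unique solution and that the resulting cutoff strategy is a mutual best response.

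The argument has three ingredients. (1) \emph{Monotone best response in own signal}: for fixed $\tau$, $\Phi_\sigma(s,\tau)$ is strictly increasing in $s$. This follows from Assumption~\ref{ass:theta} (that $U_J(\alpha;\theta)$ is increasing in $\theta$ for each $\alpha$), Lemma~\ref{lem:monotone-alpha} (that $U_J$ is increasing in the coordination probability, hence in $\alpha(\theta;\tau)$ which is increasing in $\theta$), together with the MLRP of the signal: log-concavity of $g_\sigma$ makes the family $\{g_\sigma(s-\theta)\}$ have the monotone likelihood ratio property in $(s,\theta)$, so a higher $s$ first-order stochastically shifts the posterior on $\theta$ upward, raising the expectation of the increasing integrand. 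This guarantees that if others use a cutoff, the best response is itself a cutoff, and it pins the own-cutoff to the zero of $s\mapsto\Phi_\sigma(s,\tau)$. (2) \emph{Monotone comparative static in the opponents' cutoff}: $\Phi_\sigma(s,\tau)$ is (weakly) monotone in $\tau$ — raising $\tau$ lowers $\alpha(\theta;\tau)$ pointwise, hence lowers $\pi_{n,K}$, hence lowers $U_J$, so $\Phi_\sigma$ is decreasing in $\tau$. (3) \emph{Single crossing of the fixed-point map}: define $\psi_\sigma(\tau):=\Phi_\sigma(\tau,\tau)$; by (1) it is strictly increasing in the first argument and by (2) non-increasing in the second, but the key observation is that along the diagonal the first-argument effect dominates strictly, so $\psi_\sigma$ is strictly increasing in $\tau$ — actually the cleaner route is: $\psi_\sigma$ crosses zero at most once because at any diagonal zero the own-signal slope is strictly positive and the cross-effect is continuous, giving strict single crossing. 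Combined with boundary behavior — $\Phi_\sigma(s,\tau)\to$ the full-participation payoff $U_J(\pi_{n,K}(1);+\infty)>0$ as $s\to+\infty$ under the non-degeneracy implicit in Assumption~\ref{ass:GC}, and $\to U_J(0;-\infty)=-\tilde p F_{\mathrm{eff}}<0$ as $s\to-\infty$ — the intermediate value theorem plus strict monotonicity yields existence and uniqueness of $\tau_\sigma$.

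I expect the main obstacle to be step (3): establishing strict monotonicity (or at least strict single crossing) of $\psi_\sigma(\tau)=\Phi_\sigma(\tau,\tau)$ along the diagonal, because here the two effects of $\tau$ — through the own posterior and through the opponents' joining probability — push in opposite directions, and one must argue the own-signal channel strictly dominates. The standard resolution is a change of variables: substitute $\theta=\tau-\sigma z$ so that $\Phi_\sigma(\tau,\tau)$ becomes an integral over $z$ with the posterior weight depending on $\tau$ only through the prior $h(\tau-\sigma z)$, which in the relevant limit (or under a flat/slowly-varying prior, the usual global-games regularity) is asymptotically constant; then $\alpha(\theta;\tau)=1-G_\sigma(\sigma z)$ is a function of $z$ alone, the $\tau$-dependence essentially disappears from the integrand, and one reads off that the crossing is determined by a single equation in $\tau$ with a strictly signed derivative coming from $h$ and from Assumption~\ref{ass:theta}. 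For general $\sigma$ (not just the vanishing-noise limit) one invokes the log-concavity/MLRP machinery to control the ratio of the two derivatives; alternatively one can cite the Frankel–Morris–Pauzner uniqueness theorem for monotone supermodular global games, whose hypotheses (state monotonicity, action monotonicity in the aggregate, MLRP signals, limit dominance regions) are exactly what Assumptions~\ref{ass:GC}–\ref{ass:theta} and the log-concave signal structure deliver. Finally, I would note that the selection claim (the cutoff converges to the collusive boundary as $\sigma\to0$) follows by taking limits in \eqref{eq:cutoff-characterization}, since $\pi_{n,K}(\alpha(\theta;\tau_\sigma))\to\mathbf{1}\{\theta>\tau_0\}$ and the indifference equation collapses to $U_J(\pi_{n,K}(1);\tau_0)=0$-type Laplacian-belief condition identifying $\tau_0$.
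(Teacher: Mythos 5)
Your skeleton matches the paper's: log-concavity $\Rightarrow$ MLRP $\Rightarrow$ cutoff best responses, reduction to the diagonal indifference equation $\Phi_\sigma(\tau,\tau)=0$, single crossing plus boundary conditions, with Morris--Shin / Frankel--Morris--Pauzner cited for the surrounding machinery. The one substantive divergence is at the crux --- the single-crossing of the diagonal map --- where you and the paper resolve the tension between the two opposing channels in opposite ways. The paper asserts that $\Phi_\sigma(\tau,\tau)$ is \emph{strictly decreasing} in $\tau$ because the strategic channel (raising $\tau$ lowers $\alpha(\theta;\tau)$, hence $\pi_{n,K}$, hence $U_J$) ``dominates by Lemma~\ref{lem:monotone-alpha}'' the fundamental channel (higher $s$ shifts the posterior on $\theta$ up). You instead argue the diagonal is strictly \emph{increasing}: after the change of variables $\theta=\tau-\sigma z$, the posterior law of $\tau-\theta$ given $s_i=\tau$ is (under a uniform or slowly varying prior) independent of $\tau$, so $\alpha(\theta;\tau)=1-G_\sigma(\sigma z)$ carries no residual $\tau$-dependence and the fundamental channel (Assumption~\ref{ass:theta}) alone signs the derivative. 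Your version is the standard global-games argument and is the more defensible one: Lemma~\ref{lem:monotone-alpha} only gives the \emph{sign} of $\partial U_J/\partial\alpha$, not the relative magnitude of the two effects, so the paper's dominance claim is not actually delivered by the lemma it cites, whereas your Laplacian-belief computation shows the strategic effect washes out on the diagonal rather than dominating. Either sign of strict monotonicity yields uniqueness of the zero, so both sketches reach the stated conclusion, but your route makes explicit the prior-regularity (or FMP-hypothesis) step that the paper leaves implicit; the residual caveats in your writeup (the diagonal boundary behavior gives $\alpha\to 1-G_\sigma(0)$ rather than $1$, so the upper dominance region must come from $\theta$-growth of the success bonus, and the $\sigma\to 0$ selection remark conflates the Laplacian condition with the paper's $U_J(1;\theta^\star)=0$ boundary) are at the same level of informality as the paper's own ``boundary conditions ensure a unique zero.''
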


\begin{proof}[]
Strict log-concavity implies MLRP, so best responses are threshold in own signal. Given a candidate cutoff \(\tau\), the LHS of \eqref{eq:cutoff-characterization} is continuous and strictly decreasing in \(\tau\): as \(\tau\) rises, \(\alpha(\theta;\tau)\) falls (reducing \(\pi_{n,K}\) and \(U_J\)), while conditioning on a higher \(s\) shifts \(\theta\) upward (Assumption~\ref{ass:theta}); the first effect dominates by Lemma~\ref{lem:monotone-alpha}. Boundary conditions ensure a unique zero. Uniqueness of the cutoff equilibrium then follows. See~\cite{MorrisShin2003} or ~\cite{FrankelMorrisPauzner2003} for closely related arguments.
\end{proof}

\begin{proposition}[Limit selection as \(\sigma\to0\)]
\label{prop:limit}
Let \(\theta^\star\) uniquely solve \(U_J(1;\theta^\star)=0\).
As \(\sigma\to0\), \(\tau_\sigma\to\theta^\star\),
and for \(\theta>\theta^\star\) (resp.\ \(\theta<\theta^\star\)) the probability of the collusive profile tends to one (resp.\ zero).
\end{proposition}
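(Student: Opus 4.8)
The plan is to prove the limit-selection result in two stages: first establish that the cutoff $\tau_\sigma$ converges to $\theta^\star$ as noise vanishes, and then translate cutoff convergence into the stated convergence of the collusive-profile probability. Throughout I will use Proposition~\ref{prop:unique}, which gives the existence and uniqueness of $\tau_\sigma$ characterized by \eqref{eq:cutoff-characterization}, together with the monotonicity supplied by Lemma~\ref{lem:monotone-alpha} and Assumption~\ref{ass:theta}.

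\textbf{Step 1: Laplacian belief at the cutoff.} The key observation is the ``Laplacian property'' from the global-games literature: an agent whose own signal sits exactly at the common cutoff $\tau_\sigma$ believes that the fraction of opponents with higher signals is (asymptotically, as $\sigma\to 0$) uniformly distributed on $[0,1]$; more precisely, conditional on $s_i=\tau_\sigma$, the number of the other $n-1$ providers who join is asymptotically independent of the precise value of $\theta$ near $\tau_\sigma$, and $\alpha(\theta;\tau_\sigma)$ evaluated at the relevant conditional law of $\theta$ does not collapse to $0$ or $1$. I would make this precise using the log-concavity of $g_\sigma$ (hence MLRP) exactly as in \cite{MorrisShin2003}: conditioning on $s_i=\tau$ and letting $\sigma\to 0$, the posterior over $\theta$ concentrates, and the probability that a given opponent has signal above $\tau$ tends to a limit determined by the relative position of $\theta$ and $\tau$. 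The upshot is that in the vanishing-noise limit the indifference condition \eqref{eq:cutoff-characterization} forces $U_J$ evaluated at the ``pivotal'' belief to be zero, and because $U_J(\alpha;\theta)$ is strictly increasing in both arguments, the only value of $\theta$ consistent with indifference is the one solving $U_J(1;\theta)=0$, i.e. $\theta^\star$. Hence $\tau_\sigma\to\theta^\star$.

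\textbf{Step 2: From cutoff convergence to profile selection.} Fix $\theta>\theta^\star$. Since $\tau_\sigma\to\theta^\star<\theta$, for all $\sigma$ small enough we have $\tau_\sigma<\theta-\delta$ for some $\delta>0$. Conditional on the true fundamental being $\theta$, each provider's signal is $s_i=\theta+\varepsilon_i$ with $\varepsilon_i$ having CDF $G_\sigma$ that concentrates at $0$, so $\Pr[s_i\ge\tau_\sigma\mid\theta]=1-G_\sigma(\tau_\sigma-\theta)\ge 1-G_\sigma(-\delta)\to 1$. By independence of the $\varepsilon_i$ across providers, the probability that all $n$ providers draw signals above $\tau_\sigma$ — hence all join, realizing the collusive profile — is at least $\big(1-G_\sigma(-\delta)\big)^n\to 1$. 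The symmetric argument with $\theta<\theta^\star$: then $\tau_\sigma>\theta+\delta$ eventually, $\Pr[s_i\ge\tau_\sigma\mid\theta]\le G_\sigma(-\delta)'\to 0$ where I mean $1-G_\sigma(\tau_\sigma-\theta)\le 1-G_\sigma(\delta)\to 0$, and a union bound over the $n$ providers gives that the probability at least one provider joins tends to $0$, so the no-join profile prevails with probability tending to one. (Here I also use Proposition~\ref{prop:corners}: once nobody joins, no-join is the realized equilibrium outcome, and once everybody joins, the all-join collusive profile is realized.)

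\textbf{Step 3: Well-posedness of $\theta^\star$.} I should also note why $\theta^\star$ is well-defined: $U_J(1;\theta)=\frac{1-p_K}{K}\,\omega(\theta)-p_K F_{\mathrm{eff}}(n,K)$ is, by Assumption~\ref{ass:theta}, strictly increasing in $\theta$ and crosses zero at most once; the standard global-games boundary assumptions (that collusion is strictly unprofitable for $\theta$ very small and strictly profitable for $\theta$ very large, i.e. ``dominance regions'' exist) guarantee exactly one crossing, which is $\theta^\star$.

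\textbf{Main obstacle.} The technical heart — and the step most in need of care — is Step~1, specifically making the Laplacian/pivotal-belief computation rigorous with a general log-concave noise family $G_\sigma$ rather than a Gaussian: one must show that conditioning on $s_i=\tau_\sigma$ and taking $\sigma\to 0$ yields a nondegenerate limiting belief about how many opponents are above the cutoff, so that the indifference equation in the limit genuinely pins $\tau_\sigma$ to $\theta^\star$ rather than to some other point. I would handle this by invoking the standard change-of-variables $\theta = \tau_\sigma - \sigma z$ (or the analogous rescaling), showing the conditional density of the rescaled fundamental converges, and passing to the limit inside the expectation in \eqref{eq:cutoff-characterization} via dominated convergence — the details being exactly those in \cite{MorrisShin2003,FrankelMorrisPauzner2003}, which I would cite rather than reproduce in full.
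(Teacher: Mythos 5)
Your proposal follows the same route as the paper, whose own proof is essentially a citation of the standard global-games limit-uniqueness results; but your attempt to make Step~1 precise exposes a genuine gap at exactly the point where the work has to happen. The Laplacian property you invoke says that, as \(\sigma\to0\), the agent at the cutoff holds a \emph{uniform} belief over the number of opponents who join (equivalently, over the rank of her own signal). The limiting indifference condition is therefore an \emph{average} of the payoff over coalition sizes,
\[
\frac{1}{n}\sum_{m=0}^{n-1} u(m;\theta^{L}) \;=\;0,
\]
where \(u(m;\theta)\) is the payoff to joining when exactly \(m\) others join, and not \(U_J(1;\theta)=0\). Your inference that ``because \(U_J(\alpha;\theta)\) is strictly increasing in both arguments, the only value of \(\theta\) consistent with indifference is the one solving \(U_J(1;\theta)=0\)'' is a non sequitur: monotonicity in \(\alpha\) does not collapse a uniform average over \(\alpha\) to its value at \(\alpha=1\). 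Since the payoff is increasing in the number of joiners, the Laplacian threshold \(\theta^{L}\) satisfies \(u(n-1;\theta^{L})>0>u(0;\theta^{L})\) and hence generically lies \emph{strictly above} the point \(\theta^\star\) where \(U_J(1;\theta^\star)=0\). So your Step~1 does not deliver \(\tau_\sigma\to\theta^\star\); it delivers \(\tau_\sigma\to\theta^{L}\), and for \(\theta\in(\theta^\star,\theta^{L})\) the selected profile is no-join, contrary to the statement.

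Two further remarks. First, Steps~2 and~3 are fine conditional on Step~1: converting cutoff convergence into the probability statements via concentration of \(G_\sigma\) is correct (note only that the collusive event needs at least \(K\) joiners, which your ``all \(n\) join'' bound covers, and the no-join direction needs fewer than \(K\) joiners, which your union bound covers), and your observation that dominance regions are needed for \(\theta^\star\) to be well-defined is a point the paper leaves implicit. Second, the paper's own one-sentence proof asserts that the limit selects the boundary of \(\{\theta:U_J(1;\theta)\ge0\}\) and cites the global-games literature; that literature in fact selects the Laplacian threshold, so the discrepancy you would need to resolve sits in the proposition itself, not only in your write-up. If the intended claim is the weaker statement that collusion is selected above \emph{some} threshold lying weakly above \(\theta^\star\), your architecture works once Step~1 is corrected to target \(\theta^{L}\).
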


\begin{proof}[]
Under Assumptions~\ref{ass:theta} and \ref{ass:GC}, the complete-information game has exactly the two pure equilibria in Proposition~\ref{prop:corners}. The set of \(\theta\) for which all-join is an equilibrium is \(\{\theta:U_J(1;\theta)\ge0\}\), which has boundary \(\theta^\star\). Standard global-games arguments (e.g., ~\cite{CarlssonVanDamme1993,MorrisShin2003,FrankelMorrisPauzner2003}) imply the unique monotone equilibrium selects this boundary in the vanishing-noise limit.
\end{proof}

\subsection{Zipf heterogeneity and majority}\label{subsec:zipf}
Motivated by well-known rank--size regularities in firms \cite{axtell2001zipf}, we adopt a Zipf law with slope \(s=1\):
\begin{equation}
F_{(r)}\;=\;C\,r^{-1},\qquad r=1,\dots,n,
\label{eq:zipf}
\end{equation}
for scale parameter \(C>0\). Under majority \eqref{eq:majority},
\begin{equation}
F_{\mathrm{eff}}(n,K)\;=\;F_{(n-K+1)}\;=\;\frac{C}{n-K+1}
\;=\;
\begin{cases}
\displaystyle \frac{C}{K}, & n\ \text{odd},\\[6pt]
\displaystyle \frac{C}{K-1}, & n\ \text{even}.
\end{cases}
\label{eq:Feff-zipf}
\end{equation}
Let \(a:=1-q\in(0,1)\).
Substituting into \eqref{eq:corner-test} yields, for \(n\) odd,
\begin{equation}
U_J(1)\;=\;\frac{1}{K}\,\Big\{(\omega+C)\,a^{\,K}-C\Big\}.
\label{eq:corner-zipf}
\end{equation}
The bracket is strictly decreasing in \(K\), so the sign of \(U_J(1)\) is single-crossing in \(K\).

\begin{corollary}[Closed-form deterrence thresholds for odd $n$]
All-join is not an equilibrium iff $(\omega+C)(1-q)^K\le C$, equivalently
\[
K\ \ge\ \frac{\log\!\big(\frac{C}{\omega+C}\big)}{\log(1-q)}\quad\text{or}\quad
q\ \ge\ 1-\Big(\frac{C}{\omega+C}\Big)^{1/K}.
\]
\end{corollary}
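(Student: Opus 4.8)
The plan is to read the corollary as an immediate consequence of Proposition~\ref{prop:corners} combined with the closed form \eqref{eq:corner-zipf}, so that the only real work is algebra on a single scalar inequality. First I would recall that, by Proposition~\ref{prop:corners}, for odd \(n\) the all-join profile is a symmetric equilibrium if and only if \(U_J(1)\ge 0\); hence it fails to be an equilibrium exactly on the complementary region \(U_J(1)<0\), with the knife-edge \(U_J(1)=0\) the only point of overlap (which is why the corollary may be stated with a weak inequality without loss of content).

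Next I would substitute the Zipf expression. From \eqref{eq:corner-zipf}, \(U_J(1)=\tfrac{1}{K}\big\{(\omega+C)\,a^{K}-C\big\}\) with \(a:=1-q\in(0,1)\) and \(K\ge1\). Since the prefactor \(1/K\) is strictly positive, \(\operatorname{sign} U_J(1)=\operatorname{sign}\big((\omega+C)a^{K}-C\big)\), so ``all-join not an equilibrium'' is equivalent to \((\omega+C)a^{K}-C\le 0\), i.e. to \((\omega+C)(1-q)^{K}\le C\). This is the first displayed equivalence.

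To obtain the explicit thresholds I would isolate the power: \((1-q)^{K}\le \tfrac{C}{\omega+C}\), where \(\tfrac{C}{\omega+C}\in(0,1)\) because \(\omega=\beta V\ge 0\) and \(C>0\). Taking logarithms and dividing by \(\log(1-q)<0\) reverses the inequality, giving \(K\ \ge\ \log\!\big(\tfrac{C}{\omega+C}\big)/\log(1-q)\), a positive bound since numerator and denominator are both negative. Alternatively, taking \(K\)-th roots of the same inequality (valid as all quantities are positive and \(x\mapsto x^{1/K}\) is increasing) yields \(1-q\le \big(\tfrac{C}{\omega+C}\big)^{1/K}\), i.e. \(q\ \ge\ 1-\big(\tfrac{C}{\omega+C}\big)^{1/K}\). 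I would close by noting that the bracket \((\omega+C)a^{K}-C\) is strictly decreasing in \(K\) (already observed after \eqref{eq:corner-zipf}) and strictly decreasing in \(q\) (since \(a^{K}\) decreases in \(q\)), so the feasible region is an up-set in each of \(K\) and \(q\); this is what justifies the term ``threshold.''

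I do not expect a genuine obstacle here: the substantive work was done in Proposition~\ref{prop:corners} and in deriving \eqref{eq:corner-zipf}. The only points demanding care are the direction of the inequality when dividing through by the negative quantity \(\log(1-q)\), and the bookkeeping of the boundary case \(U_J(1)=0\), where ``all-join is a (weak) equilibrium'' and ``all-join is not an equilibrium'' coincide.
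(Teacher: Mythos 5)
Your proposal is correct and matches the paper's (implicit) derivation exactly: the paper offers no separate proof of the corollary, treating it as immediate from the corner test \eqref{eq:corner-test} and the Zipf substitution \eqref{eq:corner-zipf}, which is precisely the sign analysis and logarithm algebra you carry out. Your added care about the direction of the inequality when dividing by \(\log(1-q)<0\) and about the knife-edge case \(U_J(1)=0\) is sound and, if anything, slightly more precise than the paper's weak-inequality statement.
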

\noindent These thresholds yield immediate policy comparatives: raising $q$ (detection), raising $C$ (sanctions), or raising $K$ (required majority size) each shrinks the collusive region.

\begin{remark}[Single-crossing in \(K\) (sign)]
For fixed parameters, the sign of \(U_J(1)\) as a function of \(K\) changes at most once, which is the relevant property for existence of the collusive corner.
\end{remark}

\subsection{Design bound}\label{subsec:design}
A conservative design sets \(V\) so that joining is unprofitable even if success is assured:
\begin{equation}
V^{\mathrm{safe}}(n,K)\;=\;\frac{K}{(1-p_K)\,\beta}\,p_K\,F_{\mathrm{eff}}(n,K).
\label{eq:Vsafe}
\end{equation}

\section{Calibration and Parameterization}\label{sec:calibration}

We calibrate the conservative design bound in Eq.~\eqref{eq:Vsafe}:
\[
V^{\mathrm{safe}}(n,K)
\;=\;\frac{K}{(1-p_K)\,\beta}\,p_K\,F_{\mathrm{eff}}(n,K),
\]
where \(F_{\mathrm{eff}}(n,K)\) is the binding per‑member sanction from Lemma~\ref{lem:binding}, \(p_K=1-(1-q)^K\) is the detection probability for a size‑\(K\) coalition within the effective exploitation window (\S\ref{subsec:detection}), and \(\beta\in(0,1]\) is the fraction of system stock \(V\) extractable as flow before suspicion, patching, or market responses curtail it (\S\ref{subsec:env}, Eq.~\eqref{eq:flow}). 

\paragraph{Sanction scale \(F_{\mathrm{eff}}\).}
In practice, \(F_{\mathrm{eff}}\) is driven by tangible financial and reputational losses borne by large providers:
\begin{enumerate}[label=(\roman*)]
  \item \emph{Equity drawdowns.} Event‑study evidence places the average post‑breach equity loss for public firms in the \(7\text{–}8\%\) range; we adopt \(7.5\%\) as a reference \cite{hbr2023cyberbreach}. Single‑day cloud‑related moves of \(\sim\$90\)B in market value have occurred at hyperscalers (e.g., Amazon) \cite{reuters2025amazon}, and sector examples (e.g., Okta) show double‑digit drops on breach news (about \(12\%\)) \cite{reuters2023okta}. For a top‑tier market cap \(M_{\text{top}}\!\approx\!\$1.8\)T, \(0.075\times M_{\text{top}}\) implies \(\$135\)B of equity loss on announcement.
  \item \emph{Legal and regulatory exposure.} GDPR empowers fines up to \(4\%\) of global annual turnover \cite{gdpr2016}; large U.S. cases have settled in the hundreds of millions (e.g., Equifax’s \(\$575\)M global settlement) \cite{ftc2019equifax}.
  \item \emph{Customer flight and financing costs.} Breaches damage trust and can worsen financing terms; major credit‑rating agencies increasingly incorporate cybersecurity posture into ratings and outlooks \cite{wapo2023creditcyber}.
  \item \emph{Remediation overhead.} Post‑incident audit, re‑platforming, and security spend are material and reduce margins; these are widely documented in breach case studies \cite{hbr2023cyberbreach}.
\end{enumerate}

\noindent\emph{Note} The cited magnitudes largely reflect ``innocent'' breaches (operational failures, compromised vendors). A malicious, profit‑motivated collusion to steal secrets would plausibly inflict a larger reputational penalty; in addition, criminal liability (including potential jail time) strengthens the effective penalty further.

\begin{figure}
    \centering
    \includegraphics[width=1\linewidth]{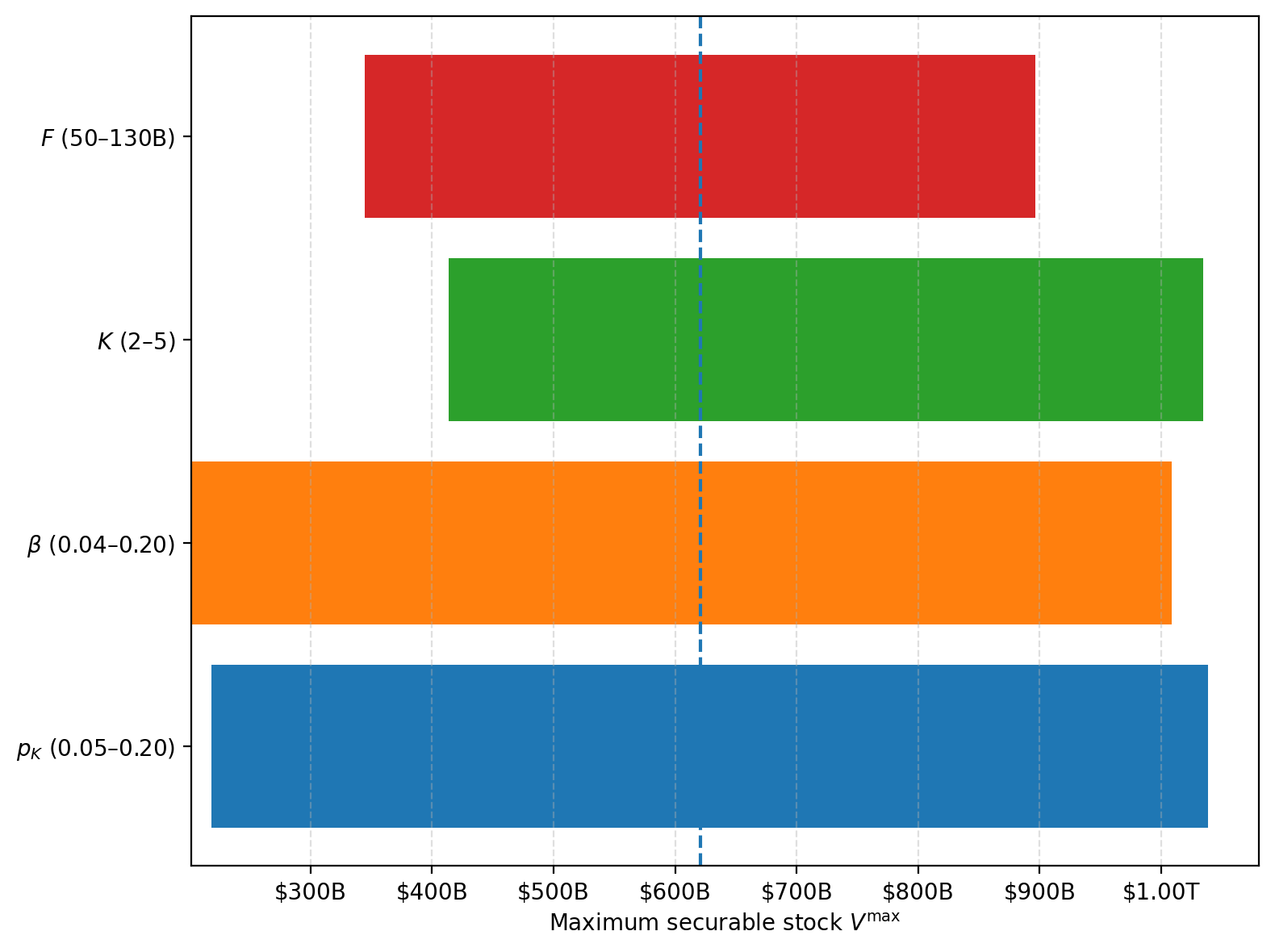}
    \caption{Tornado Graph of total effective security under variable ranges}
    \label{fig:tornado}
\end{figure}

\paragraph{Detection at the coalition \(p_K\).}
Bryant and Eckard (1991) estimate the risk of a cartel getting caught at 13\% to 17\% annually\cite{bryant1991price}. There is natural uncertainty on this number, since it can be hard to get data the collusions which were not detected. But on the flipside, the physical traces from TEE breaches, faulty attestation possibilities, etc. may make the probability uniquely higher in our setting. We take \(p_K=0.15\) and explore \(p_K\in[0.05,0.20]\). 
\[
q\;=\;1-(1-p_K)^{1/K}
\qquad\text{(e.g., }K{=}3,\ p_K{=}0.15\Rightarrow q\approx 5.3\%\text{ per window).}
\]

\paragraph{Flow fraction \(\beta\).}
We cap \(\beta\) using market turnover and operational controls. Annual equity turnover in large markets is often \(0.6\text{–}0.8\) \cite{fred_turnover_usa}, meanwhile 10-day median dwell times for many incidents (cf.\ \cite{mandiant2024mtrends}), along with but protocol‑level patch \& re‑attestation requirements can reduce this proportion further. There are endogenous considerations which limit effective extraction as well: 
\begin{itemize}
    \item the full value of the flow may not be directly extractable (e.g. only a proportion of it may be capturable via e.g. information arbitrage and front-running)
    \item market microstructure theory \cite{glosten1985bid,kyle1985continuous}
and practice\cite{cartea2023detecting} predicts that liquidity contracts when flow looks suspicious, which can dry up the exploitable flow.
\end{itemize}
We consider a baseline \(\beta=0.06\) as a conservative estimate and examine \(\beta\in[0.03,0.10]\).

\paragraph{Threshold.}
We illustrate with \(n=5\) and \(K=3\). With constant \(F_{\mathrm{eff}}\), changing \(K\) scales \(V^{\mathrm{safe}}\) linearly (holding \(p_K\) fixed).

\paragraph{Back‑of‑the‑envelope baseline.}
With \(K=3\), \(p_K=0.15\), \(\beta=0.06\), and \(F_{\mathrm{eff}}=\$135\)B,
\[
V^{\mathrm{safe}}
\;=\;\frac{3}{0.06}\cdot\frac{0.15}{0.85}\cdot \$135\text{B}
\;\approx\;\mathbf{\$1.19\ \text{trillion}}.
\]
A \emph{tighter} throttle (smaller \(\beta\)) raises the bound (e.g., \(\beta=0.055\Rightarrow\sim\$1.30\)T); a \emph{looser} throttle (\(\beta=0.065\)) lowers it (\(\sim\$1.10\)T).

\paragraph{Tornado bands.}
Unless stated otherwise we vary one parameter at a time around the baseline:
\[
F_{\mathrm{eff}}\in[\$100,\$135]\text{B},\quad
p_K\in[0.05,0.20],\quad
\beta\in[0.03,0.10],\quad
K\in\{3,4,5,6,7\}.
\]
Because \(V^{\mathrm{safe}}\) scales linearly in \(F_{\mathrm{eff}}\) and \(K\), with odds \(p_K/(1-p_K)\), and inversely in \(\beta\), the bars in Fig.~\ref{fig:tornado} are directly interpretable.

\begin{figure}
    \centering
    \includegraphics[width=0.75\linewidth]{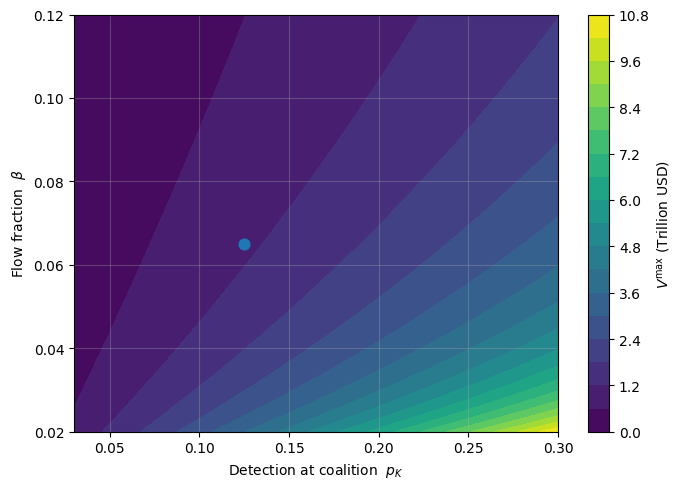}
    \caption{Iso Curves on Flow Level and Detection Probability}
    \label{fig:jointsensitivity}
\end{figure}

\section{Conclusion}
We derived a principal–agent model with a coalition threshold, layered detection, and heterogeneous sanctions yields tractable deterrence conditions and a conservative design bound $V^{\mathrm{safe}}$ that protocol designers can target ex ante. Under plausible parameters informed by time‑advantaged arbitrage, the cheapest collusion may remain uneconomical even at trillion‑dollar secured value.

The engineering of \name{} aligns with these levers: near‑stateless TEEs and periodic restarts compress extraction windows; DKG raises the threshold $K$; and accountability via attestations physical breach traces increase effective detection. 

Insights include:
\begin{itemize}
    \item Protocol-coordinated TEEs can achieve remarkable high levels of security, by effectively borrowing it from the reputational risk of existing firms.
    \item In an optimal design, the right unit of account for TEE compromise is \emph{flow}, not \emph{stock}.
    \item This can be achieved without relying on the individual security of a single TEE design.
\end{itemize}

We view improving empirical estimates of detection rates $q$ and extraction windows (domain‑specific $\beta$) as the key path to tighter calibrations and safer TEE‑augmented BFT systems.

\bibliographystyle{splncs04}
\bibliography{ref}

\end{document}